\renewcommand{\orcidID}[1]{\href{https://orcid.org/#1}{\includegraphics[scale=.03]{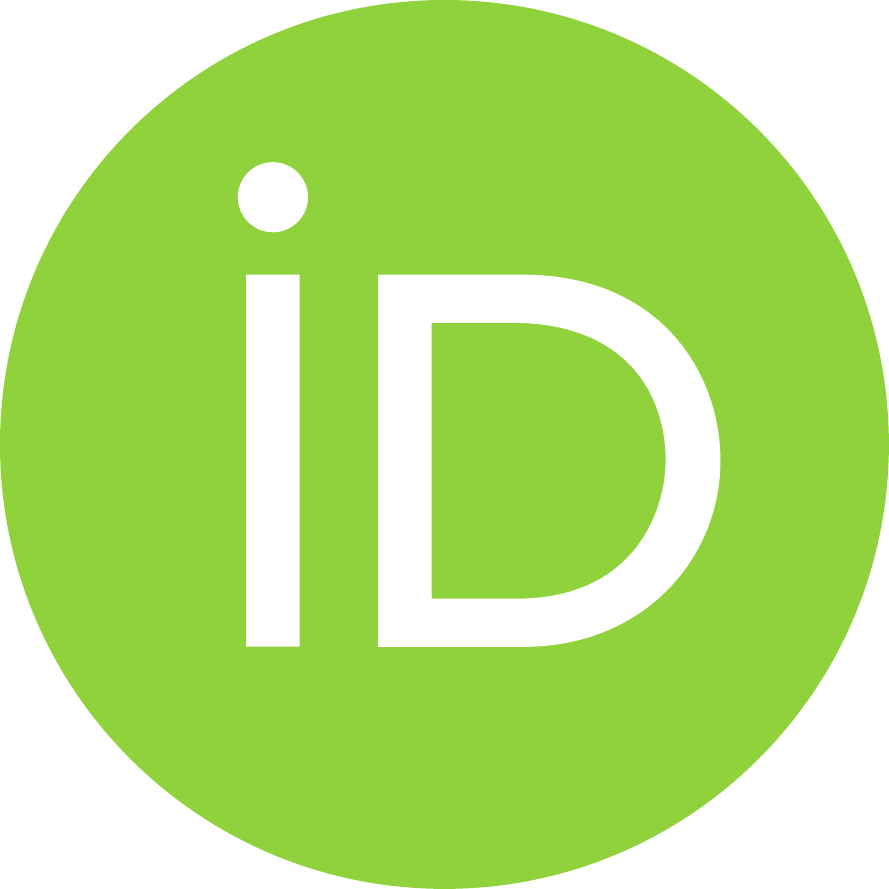}}}
\newcommand{\crsv}[1]{\textsc{CRSV}($#1$)}
\newcommand{\crs}[1]{\textsc{CRS}($#1$)}
\renewcommand{\emph}[1]{\textcolor{blue!80!black}{\it #1}\xspace}
\newtheorem*{problem*}{Problem}
\newtheorem*{theorem*}{Theorem}
\newtheorem{iclaim}{Claim}
\newtheorem{observation}{Observation}
\begin{document}

\title{An FPT Algorithm for Bipartite Vertex Splitting}
% \author{Anonymized submission}
% \institute{~}
\author{
Reyan~Ahmed\inst{1} \and
Stephen~Kobourov\inst{2}\orcidID{0000-0002-0477-2724} \and 
Myroslav~Kryven\inst{2}
} 

\authorrunning{Ahmed, Kobourov, Kryven}
% First names are abbreviated in the running head.
% If there are more than two authors, 'et al.' is used.
%
\institute{Colgate University\\
\email{abureyanahmed@arizona.edu} \and
University of Arizona\\
\email{\{kobourov,myroslav\}@arizona.edu}}

\maketitle              % typeset the header of the contribution
\begin{abstract}
Bipartite graphs model the relationship between two disjoint sets of objects.
They have a wide range of applications and are often visualized as 2-layered drawings,
where each set of objects is visualized as vertices (points) on one of two parallel horizontal lines and the relationships
are represented by (usually straight-line) edges between the corresponding vertices.
One of the common objectives in such drawings is to minimize the number of crossings. This, in general,
is an NP-hard problem and may still result in drawings with so many crossings that they affect the readability of the drawing.
We consider a recent approach to remove crossings in such visualizations by splitting vertices,
where the goal is to find the minimum number of vertices to be split to obtain a planar drawing. We show that determining whether a planar 2-layered drawing exists after splitting at most $k$ vertices is fixed parameter tractable in $k$.

\keywords{fixed parameter tractability  \and graph drawing \and vertex splitting}
\end{abstract}
\section{Introduction}

Bipartite graphs are used in many applications to study complex systems and their dynamics~\cite{pavlopoulos2018bipartite}. We can visualize a bipartite graph $G=(T\cup B, E)$ as a 2-layered drawing where vertices in $T$ are placed (at integer coordinates) along the horizontal line defined by $y=1$ and vertices in $B$ along the line below (at integer coordinates) defined by $y=0$.

A common optimization goal in graph drawing is to minimize the number of crossings. Deciding whether a planar 2-layered drawing exists for a given graph can be done in linear time, although most graphs, including sparse ones such as cycles and binary trees, do not admit planar 2-layered drawings~\cite{eades1986edge}. The problem of minimizing the number of crossings in 2-layered layouts is NP-hard, even if the maximum degree of the graph is at most four~\cite{munoz2001one}, or if the permutation of vertices is fixed on one of the layers~\cite{eades1986edge}.
The latter variant of the problem is known as One-Sided Crossing Minimization (OSCM).  The minimum number of crossings in a 2-layered drawing can be approximated within a factor of $1.47$ and $1.3 +12/(\delta-4)$, where $\delta$ is the minimum degree, given that $\delta > 4$~\cite{nagamochi2003improved}. %\todo{Nagamochi 05} 
Dujmovi\'c et al.~\cite{DujmovicW04} gave a fixed-parameter tractable (FPT) algorithm with runtime $O(1.62^k\cdot n^2)$, which was later improved to $O(1.4656^k + kn^2)$~\cite{dujmovic2008fixed}. %\todo{ Dujmovi\'c, Fernau, and Kaufmann, kobayashi[6]}. 
Fernau et al.~\cite{fernau2010ranking} %[8] 
reduced this problem to weighted FAST (feedback arc sets in tournaments) obtaining a subexponential time algorithm that runs in time $2^{O(\sqrt{k}\log{k})} +n^{O(1)}$. Finally Kobayashi and Tamaki~\cite{kobayashi2012fast} gave a straightforward dynamic programming algorithm on an interval graph associated with each OSCM instance with runtime~$2^{O(\sqrt{k}\log{k})} +n^{O(1)}$. They also showed that the exponent $O(\sqrt{k})$ in their bound is asymptotically optimal under the Exponential Time Hypothesis (ETH)~\cite{impagliazzo2001complexity}, a well-known complexity assumption which states that, for each $k \geq 3$, there is a positive constant $c_k$ such that $k$-SAT cannot be solved in $O(2^{c_k n})$ time where $n$ is the number of variables. %\todo{kogayashi[9], elaborate more?}.

% Further, graph layouts on two layers have also been widely studied in the area of graph drawing beyond planarity. \todo{org [4]}

Minimizing the number of crossings in 2-layer drawings may still result in visually complex drawings from a practical point of view~\cite{jm-2scmpeha-97}. Hence, we study vertex splitting~\cite{em-vtl-95,Liebers-survey01,ekkllm-pstg-18,Knauer2016} which aims to construct planar drawings, and thus, avoid crossings altogether. In the \emph{split} operation for a vertex $u$ we delete $u$ from $G$, add two new copies $u_1$ and $u_2$, and distribute the edges originally incident to $u$ between the two new vertices $u_1$ and $u_2$. 
There are two main variations of the objective in vertex splitting: minimizing the number of split operations (or \emph{splits}) and minimizing the number of \emph{split vertices} (each vertex can be split arbitrary many times) to obtain a planar drawing of $G$. 
Minimizing the number of splits is NP-hard even for cubic graphs~\cite{faria_splitting_2001nourl}. 
Nickel et al.~\cite{nickel-eurocg22} extend the investigation of the problem and its complexity from abstract graphs to drawings of graphs where splits are performed on an underlying drawing.
% However, the splitting numbers of complete and complete bipartite graphs are known~\cite{HartsfieldJR85nourl,jackson_splitting_1984nourl}.

Vertex splitting in bipartite graphs with 2-layered drawings has not received much attention~\cite{dagrep2l}.
In several applications, such as visualizing graphs defined on anatomical structures and cell types in the human body \cite{website}, the two vertex sets of $G$ play different roles and vertex splitting is
allowed only on one side of the layout. This
has motivated the interest in splitting the vertices in only one vertex
partition of the bipartite graph.
It has been shown that minimizing splits in this setting is NP-hard for an arbitrary bipartite graph~\cite{DBLP:journals/tcad/ChaudharyCHNRW07}. 

The other variant --  minimizing the number of split vertices -- has been recently considered and was shown to be NP-hard~\cite{ahmed-split-vertices}. %\todo{can we cite the hardness?}
%Given the difficulty of the problem and the plentiful history of FPT algorithms for its akin problem in bipartite crossing minimization, 
On the positive side, we show that the problem is FPT parameterized by the natural parameter, that is, the number of split vertices.

\begin{problem*}[Crossing Removal with $\boldsymbol{k}$ Split Vertices  -- \crsv{k}]
Let $G = (T \cup B, E)$ be a bipartite graph. Decide whether there is a planar 2-layer drawing of $G$ after splitting  at most $k$ vertices of $B$.
\end{problem*}

In the next section we prove the following theorem.

\begin{theorem}
\label{thm:crsvk-fpt}
Given a bipartite graph $G = (T \cup B, E)$, the \crsv{k} problem can be decided in time $2^{O(k^6)}\cdot m$, where $m$ is the number of edges of $G$.
\end{theorem}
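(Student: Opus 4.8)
The plan is to reduce \crsv{k} to a bounded search over a small "interesting" part of the instance, then solve each reduced instance exactly. First I would recall the classical characterization of graphs admitting a planar 2-layer drawing: a connected graph has such a drawing iff it is a \emph{caterpillar} (a tree whose internal vertices form a path), and in general a graph admits a planar 2-layer drawing iff each connected component is a caterpillar. So after splitting at most $k$ vertices of $B$, the resulting (multi)graph $G'$ must be a disjoint union of caterpillars. Splitting a vertex of $B$ can only increase the number of components and can only remove edges incident to that vertex from a given copy; crucially, it never touches $T$ and never changes the degrees on the $T$-side. This gives two immediate necessary conditions that I would establish as a warm-up: (i) every vertex of $T$ must already have degree at most $2$ in $G'$, but since splitting does not change $T$-degrees, in fact every vertex of $T$ must have degree at most... no — a caterpillar may have high-degree internal vertices, so $T$-degrees are unconstrained; the real structural handle is that \emph{cycles} and \emph{high-degree "branching" configurations} must be destroyed by the $k$ splits.

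The heart of the argument is a \textbf{kernelization / irrelevant-part} step. I would argue that if $G$ contains a large "planar part" — a large subgraph that is already a union of caterpillars hanging off the rest in a controlled way — then that part is irrelevant and can be reduced, because an optimal solution never needs to split a vertex whose removal-neighbourhood is already caterpillar-like. Concretely, I would bound the number of vertices of $B$ that can possibly be "worth splitting": a vertex $u\in B$ is a candidate only if it participates in some obstruction (a cycle, or a caterpillar-violating branch). Using the fact that at most $k$ vertices are split, and each split of $u$ partitions the at most $\deg(u)$ edges among the new copies, I would show that the number of distinct ways an optimal solution can act on the relevant $B$-vertices is bounded by a function of $k$ alone — roughly, there are $O(k)$ relevant $B$-vertices (or relevant "groups"), each split into at most $k$ parts, and the combinatorial type of how their $T$-neighbourhoods are distributed and then ordered on the two layers is captured by a structure of size $\mathrm{poly}(k)$; enumerating these types costs $2^{O(k^6)}$. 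For each guessed type I would verify in linear time (via the caterpillar/forest test) whether it yields a planar 2-layer drawing, using a Hamiltonian-path-of-the-dual / consecutive-ones style check on the $T$-layer ordering.

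The step I expect to be the main obstacle is making precise the notion of the \emph{relevant part} and proving the "irrelevance lemma" — that one may assume an optimal splitting only ever redistributes edges among a bounded set of $T$-vertices, namely those lying on short paths between split vertices or in cycles through them. The subtlety is that splitting one vertex of $B$ can have global consequences on which layer-orderings remain feasible, since the two-layer drawing is a single global linear order on $T$ and on $B$; so the "bounded set of $T$-vertices that matter" must be chosen to be a separator-like set whose complement is forced into caterpillar shape regardless of the split choices. I would handle this by a cut argument: delete the (at most $k$) split vertices, observe each remaining component must already be a caterpillar (else even more splits are needed), and then the interaction between components is mediated only through the split vertices' copies, so the "glue data" is of size $\mathrm{poly}(k)$. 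Combining the enumeration of this glue data ($2^{O(k^6)}$ choices) with a linear-time feasibility check per choice gives the claimed $2^{O(k^6)}\cdot m$ bound. The exact polynomial $k^6$ in the exponent would come from: $O(k)$ relevant $B$-vertices $\times$ $O(k)$ copies each $\times$ $O(k)$ neighbourhood-buckets, with an extra factor from ordering these $O(k^2)$ objects on the line, i.e. $2^{O(k^2\log k)}$ would be the natural bound and the $k^6$ is a crude but safe overestimate — I would not optimize it.
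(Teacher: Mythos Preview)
Your high-level framework---use the caterpillar characterization, kernelize, then brute-force---is the right one, but the proposal has a genuine gap at its core. You assert that there are ``$O(k)$ relevant $B$-vertices (or relevant `groups')'' and later that ``the interaction between components is mediated only through the split vertices' copies, so the `glue data' is of size $\mathrm{poly}(k)$''. Both statements presuppose that you already know, or can cheaply enumerate, the split set; but that is exactly what you are trying to find. Your ``cut argument'' (delete the $k$ split vertices, check the rest is a caterpillar forest) is correct as a \emph{verification} once the split set is fixed, but it does not bound the number of candidate split sets by a function of $k$: naively there are $\binom{|B|}{k}$ choices. Nothing in your proposal explains why only $\mathrm{poly}(k)$ vertices of $B$ are ``worth considering'', and the hand-wave ``$u$ is a candidate only if it participates in some obstruction'' does not yield a bound---there can be unboundedly many obstructions even in Yes-instances.

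The paper fills this gap with a concrete two-stage kernel. First it identifies a set $B_{\mathrm{tr}}\subseteq B$ of vertices that \emph{must} be split in any solution: any $v\in B$ with at least three neighbours each of degree $\ge 2$. These are removed (and charged to the budget); simultaneously, pendant vertices whose unique neighbour has degree $\ge 3$ are stripped. After this, two structural lemmas kick in: every remaining $T$-vertex has degree $\le k+2$, and at most $2k$ of them have degree $\ge 3$. This bounds a ``core'' of size $O(k^2)$; everything outside the core has degree $\le 2$ and hence forms paths and cycles, which are shortened to length $O(k)$ each. Counting paths between core vertices gives the $O(k^6)$ kernel, over which one brute-forces. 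The step you are missing is precisely the must-split observation and the ensuing degree bounds---without first peeling off $B_{\mathrm{tr}}$, there is no control on $T$-degrees and your ``relevant set'' cannot be bounded.
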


%In this paper, as an alternative approach to construct readable 2-layer drawings, we study vertex splitting \todo{org [6], [8], [13], [16]}.
% TODO: define vertex splitting

We prove Theorem~\ref{thm:crsvk-fpt} using \emph{kernelization}, one of the standard techniques for designing FPT algorithms. The goal of kernelization is to reduce the input instance to its computationally hard part on which a slower exact algorithm can be applied. If the size of the reduced instance is bounded by a function of the parameter, the problem can be solved by brute force on the reduced instance yielding FPT runtime.  Our reduction consists of two parts. 

In the first part we identify and remove vertices that necessarily belong to the solution (Step~\ref{step1} below) and remove redundant vertices of the input graph $G$; Step~\ref{step2} below. Then we show that there is a solution for the reduced graph $G''_1$ if and only if there is a solution for the original graph $G$; see Claim~\ref{cl:g-g1}. 
Then we prove two structural properties about the degrees of the vertices of $G''_1$; see Lemmas~\ref{lem:obs-max-deg} and~\ref{lem:num-high-deg-vtc}. These two properties allow us to bound the size of the ``essential'' part (called the \emph{core}) of the reduced graph $G''_1$; see Lemma~\ref{lem:core-size}. 
%
% Then we prove two structural properties about the degrees of the vertices of $G''_1$: 1) the maximum degree is bounded by a function of the parameter (see Lemma~\ref{lem:obs-max-deg} and 2) the number of vertices of degree at least three is also bounded by a function of the parameter (see Lemma~\ref{lem:num-high-deg-vtc}). These two properties allow us to bound the size of the subgraph of $G''_1$ induced by the high degree (at least three) vertices and their neighborhood, which we call the \emph{core} of the instance; see Lemma~\ref{lem:core-size}. 

In the second part of the reduction we remove more redundant vertices of $G''_1$ and identify and remove the vertices that necessarily belong to the solution. Then we show that the resulting reduced graph $G'_2$ has size bounded by a polynomial function of the parameter; see Lemma~\ref{lem:g2-size}. Finally, we show that there is a solution for $G'_2$ if and only if there is a solution for $G''_1$; see Claim~\ref{cl:g1-g2}. The proof is concluded by applying an exact algorithm to the graph~$G'_2$.

\section{Proof of Theorem~\ref{thm:crsvk-fpt}}
Let $G=(T\cup B, E)$ be a bipartite graph and $k$ be the number of vertices that we are allowed to split.
% In each of the two reduction steps we will remove redundant vertices of $G$ that do not affect the solution as well as identify and remove vertices that necessarily belong to the solution, that is,  must be split to get a 2-layered drawing of $G$.   

\paragraph{First reduction rule:}
Before we describe our first reduction rule, we make a useful observation.

\begin{observation}
\label{obs:must-split}
If a vertex $v \in B$ has at least three neighbours of degree at least two, it must be split in any planar 2-layered drawing of $G$; see 
Figure~\ref{fig:a-vertex-in-B}.
% Figure~\ref{fig:g1-g-B_3}.
\end{observation}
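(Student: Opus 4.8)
The plan is to prove the contrapositive directly: I will show that if a planar 2-layer drawing of $G$ does not split $v$, then it must contain a crossing. So suppose such a drawing exists with $v$ drawn as a single vertex. Since $v \in B$, it lies on the line $y=0$, while all of its neighbours lie on the line $y=1$. Fix three neighbours $a,b,c$ of $v$, each of degree at least two, and index them so that their $x$-coordinates satisfy $x_a < x_b < x_c$; in particular $b$ is the middle one among the three on the top line, whichever three we start from.

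The heart of the argument is elementary plane geometry. The straight-line edges $va$ and $vc$, together with the segment of the line $y=1$ joining $a$ to $c$, bound a triangle $\Delta = \triangle vac$, which is non-degenerate because $v$ does not lie on the line $y=1$. Since $b$ lies in the interior of the side $ac$ of $\Delta$, the edge $vb$ subdivides $\Delta$ into the two triangles $\triangle vab$ and $\triangle vbc$. Now I use the hypothesis on $b$: as $\deg(b)\ge 2$ and $b\in T$, the vertex $b$ has a neighbour $w\in B$ with $w\ne v$, lying on the line $y=0$. The edge $bw$ leaves $b$ going downward, hence immediately enters the interior of $\triangle vab$ or of $\triangle vbc$ (it cannot run along $vb$, since distinct edges sharing the vertex $b$ agree only at $b$). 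Suppose it enters $\triangle vab$; the case $\triangle vbc$ is symmetric with $vc$ in place of $va$. The only point of $\triangle vab$ on the line $y=0$ is $v$ itself, and $w\ne v$, so $w\notin\triangle vab$; since $bw$ is a segment, convexity then forces it to meet the boundary of $\triangle vab$ at some point $p\ne b$. This $p$ cannot lie on the side $ab\subseteq\{y=1\}$ (the segment $bw$ meets $y=1$ only at $b$), cannot be the vertex $a$ or $v$ (an edge cannot pass through a vertex other than its endpoints), and cannot lie in the interior of $vb$ (again, two edges sharing $b$ meet only at $b$). Hence $p$ lies in the interior of the edge $va$, so $bw$ and $va$ cross, contradicting planarity. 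Therefore $v$ must be split.

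I expect the only delicate point to be the plane-geometry bookkeeping: confirming that $\Delta=\triangle vac$ is non-degenerate and is genuinely split into $\triangle vab$ and $\triangle vbc$ by $vb$, irrespective of where $v$ sits relative to $x_a,x_b,x_c$ (all placements yield the same picture precisely because $v$ is off the line $y=1$), and enumerating the ways in which the segment $bw$ could try to leave $\triangle vab$ without properly crossing $va$ — through the top side, through a vertex, or along $vb$ — so as to rule each of them out. Since all vertices of the drawing occupy distinct points of the two lines, these degeneracies do not arise, and the forced crossing is unavoidable.
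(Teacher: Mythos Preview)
Your argument is correct and is precisely the geometric picture the paper intends: the paper gives no formal proof of this observation, only the reference ``see Figure~\ref{fig:a-vertex-in-B}'', so your write-up is a faithful formalization of that figure. The key idea---that the middle neighbour $b$ is trapped inside the triangle $\triangle vac$, so its second edge must cross $va$ or $vc$---is exactly what the picture conveys.
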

Let $B_\text{tr}$ be the set of such vertices of degree 3 or more in $B$ (as described in Observation~\ref{obs:must-split}).
The first reduction rule consists of two steps described below. 
\begin{enumerate}
\item We initialize our solution set $\mathcal{S}$ with the vertices in $B_\text{tr}$, that is, $\mathcal{S}:=B_\text{tr}$ and remove them from the graph $G$. Let the resulting graph be $G'_1 = (T'_1\cup B'_1, E'_1)$ and $k'_1 = k - |B_\text{tr}|$; note that $T'_1 = T_1$. \label{step1}

\item  Let $T_s\subset T'_1$ be the set of vertices $v$ such that deg$(v)= 1$ and deg$(u) \ge 3$, where $u$ is the unique neighbor of $v$ in $G'_1$. Similarly, let $B_s\subset B'_1$ be the set of vertices $v$ such that deg$(v)= 1$ and deg$(u) \ge 3$, where $u$ is the unique neighbor of $v$ in $G'_1$. We remove the vertices $T_s$ and $B_s$ from the graph $G'_1$. 
% We will show in Claim~\ref{cl:g-g1} that these vertices are redundant. 
Let the resulting graph be $G''_1 = (T''_1\cup B''_1, E''_1)$.
\label{step2} 
\end{enumerate}

Let us now show the following.
\begin{iclaim}
\label{cl:g-g1}
The graph $G$ is a \textsc{Yes} instance for \crsv{k} if and only if $G''_1$ is a \textsc{Yes} instance for \crsv{k_1'}. 
\end{iclaim}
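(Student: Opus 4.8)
The plan is to prove the two directions of the equivalence separately, in each case transforming a planar 2-layered drawing of one graph into a planar 2-layered drawing of the other together with a matching bound on the number of split vertices. I would handle the two steps of the reduction rule one at a time: first relate $G$ and $G'_1$ (removal of the forced vertices $B_\text{tr}$), then relate $G'_1$ and $G''_1$ (removal of the degree-one ``pendant'' vertices $T_s$ and $B_s$).

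\emph{Step~\ref{step1} ($G \leftrightarrow G'_1$).} For the forward direction, suppose $G$ admits a planar 2-layered drawing after splitting at most $k$ vertices of $B$. By Observation~\ref{obs:must-split}, every vertex of $B_\text{tr}$ must be split, so the solution splits all $|B_\text{tr}|$ of them plus at most $k - |B_\text{tr}| = k'_1$ additional vertices of $B'_1$. Deleting the (copies of the) vertices of $B_\text{tr}$ from the drawing yields a planar 2-layered drawing of $G'_1$ obtained by splitting at most $k'_1$ vertices of $B'_1$, so $G'_1$ is a \textsc{Yes} instance for \crsv{k'_1}. For the backward direction, given a planar 2-layered drawing of $G'_1$ after at most $k'_1$ splits of $B'_1$, I need to reinsert each vertex $v \in B_\text{tr}$ as a collection of split copies so that the drawing stays planar. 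The key observation is that if $v$ has degree $d$, it can be drawn as $d-1$ degree-one copies plus one remaining copy, or more economically as copies each of degree at most two placed in ``gaps'' along the bottom line between consecutive vertices of $T$; since every such copy has all its neighbours (at most two consecutive vertices of $T$, or one vertex of $T$) available on the top line, the edges can be routed without crossings. Thus reinserting all of $B_\text{tr}$ costs exactly $|B_\text{tr}|$ splits, giving a planar drawing of $G$ after at most $k'_1 + |B_\text{tr}| = k$ splits. The mild subtlety here is to argue that inserting the copies of $v$ does not force crossings with edges already present, which follows because we may place the new copies arbitrarily close together in a single empty region of the bottom line.

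\emph{Step~\ref{step2} ($G'_1 \leftrightarrow G''_1$).} Here no new splits are introduced, so $k'_1$ is unchanged and I must show that a planar 2-layered drawing exists for $G'_1$ (after $\le k'_1$ splits of $B'_1$) iff one exists for $G''_1$ (after $\le k'_1$ splits of $B''_1$). One direction is immediate: any drawing of $G'_1$ restricts to a drawing of the subgraph $G''_1$, and splitting only copies that survive the restriction gives $\le k'_1$ split vertices of $B''_1$. For the converse, I start from a planar 2-layered drawing of $G''_1$ with $\le k'_1$ split vertices and reinsert each pendant vertex $v \in T_s \cup B_s$ next to its unique neighbour $u$. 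Since $u$ has degree $\ge 3$ in $G'_1$, in any planar 2-layered drawing $u$ must itself be split (Observation~\ref{obs:must-split} applies once we note $u$'s high-degree neighbours, or more simply $u$ is the unique neighbour of several pendants which can be distributed among copies of $u$); the pendant $v$ can then be attached to whichever copy of $u$ has a free adjacent slot on the opposite line, without creating crossings, because a degree-one vertex placed immediately beside its neighbour never crosses anything. I must be slightly careful that $u$ may itself have been deleted in Step~\ref{step1} (if $u \in B_\text{tr}$), but in that case the pendant was already accounted for in the reinsertion of Step~\ref{step1}, so the two steps compose cleanly.

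The main obstacle I anticipate is the backward direction of Step~\ref{step2}: rigorously arguing that the high-degree neighbour $u$ has enough ``room'' — i.e., enough split copies, or at least one copy with a free slot adjacent to it on the opposite layer — to absorb all the pendant vertices that were removed, and that this reattachment can be done simultaneously for all pendants without the insertions interfering with one another or with the existing edges. This requires a careful local picture of the drawing around each copy of $u$ and an observation that any split vertex of degree $\ge 1$ always has an incident ``outer'' region on both sides into which a pendant can be placed. Once this local insertion lemma is established, both directions of both steps go through, and chaining the two equivalences gives the claim.
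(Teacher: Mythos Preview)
Your overall plan and the forward directions are fine, but the backward direction of Step~\ref{step2} contains a genuine error. You argue that the high-degree neighbour $u$ of a pendant ``must itself be split'' and then attach the pendant to a copy of $u$. This fails on two counts. First, when the pendant lies in $B_s$ its neighbour $u$ is in $T$, and vertices of $T$ are never split in this problem; there are no ``copies of $u$'' to attach to. Second, even when $u\in B'_1$, Observation~\ref{obs:must-split} does not apply: it needs three neighbours of degree at least two, whereas here $u\notin B_\text{tr}$ precisely guarantees that $u$ has at most two such neighbours, so $u$ need not be split. Your concern that $u$ might lie in $B_\text{tr}$ is also a non-issue, since $T_s$ and $B_s$ are defined inside $G'_1$, after $B_\text{tr}$ has already been removed.

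The missing idea is a structural fact about $G''_1$ that the paper exploits: after Steps~\ref{step1} and~\ref{step2} every vertex of $B''_1$ has degree at most two. (Any $v\in B'_1$ has at most two neighbours of degree $\ge 2$ because $v\notin B_\text{tr}$; its remaining neighbours have degree one and lie in $T_s$, so they are removed in Step~\ref{step2}.) With this in hand the reinsertions are straightforward. A pendant $w\in B_s$, and likewise each degree-one split copy of a vertex of $B_\text{tr}$, is placed on the bottom line in a face incident to its unique neighbour in $T$; no splitting of that neighbour is needed. A pendant $u\in T_s$ has its unique neighbour $v\in B''_1$, and since $\deg_{G''_1}(v)\le 2$, the two edges at $v$ bound a safe wedge into which $u$ can be inserted (and if $\deg_{G''_1}(v)\le 1$ the insertion is even easier). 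So the correct argument rests on the degree bound in $B''_1$, not on any claim that the neighbour is split.
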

\begin{proof}
We first argue the ``only if" direction: consider a planar 2-layered drawing of $G$ with at most $k$ vertices split.
According to Observation~\ref{obs:must-split}, each vertex in $B_\text{tr}$ is split, moreover, none of the vertices in $B_s$ are split because each of them has degree one. Therefore, there are at most $k-|B_\text{tr}|$ vertices in $B \setminus (B_\text{tr} \bigcup B_s)$ that are split. Because $B''_1 = B \setminus (B_\text{tr} \bigcup B_s)$ and $k'_1 = k-|B_\text{tr}|$ there exists a planar 2-layered drawing of $G''_1$ with at most $k'_1$ vertices split.

For the ``if" direction, consider a planar 2-layered drawing of $G''_1$ with at most $k'_1$ vertices split. Note that after applying Step~\ref{step1} and Step~\ref{step2} the vertices in $B''_1$ have degree at most two. Thus for each vertex $v\in B_\text{tr}$ we can reinsert its split copies $v_1, v_2, \dots, v_{\text{deg}{(v)}}$ (each reinserted vertex has degree one) without crossings; see Figure~\ref{fig:g1-g-B_3}. 
%(We split $v$ $\text{deg}{(v)}$ may times, and so, reinsert only degree one vertices.) 
For the same reason we can reinsert the vertices in $B_s$ of degree one removed at Step~\ref{step2}; see Figure~\ref{fig:singletones-in-T}. To see that we can reinsert each vertex $u\in T_s$ of degree one removed at Step~\ref{step2} observe that we always connect it to a vertex $v\in B''_1$ of degree at least two, therefore, in any planar 2-layered drawing of $G''_1$ there is always a \emph{safe wedge} formed by two edges $vv_1$ and $vv_2$ where we can fit in the edge $vu$ without causing any crossings; see Figure~\ref{fig:singletones-in-T}.
\end{proof}
\begin{figure}[tb]
  \begin{subfigure}[t]{0.28\textwidth}
    \centering
    \includegraphics[page=1]{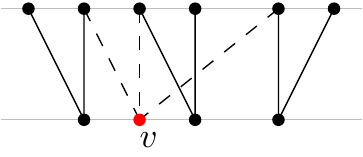}
    \caption{$v\in B_\text{tr}$}
    \label{fig:a-vertex-in-B}
  \end{subfigure}
  \hfill
  \begin{subfigure}[t]{0.35\textwidth}
    \centering
    \includegraphics[page=2]{g1-g}
    \caption{a drawing of~$G''_1$}
    \label{fig:2-layered-drawing-G''_1}
  \end{subfigure}
  \hfill
  \begin{subfigure}[t]{0.35\textwidth}
    \centering
    \includegraphics[page=3]{g1-g}
    \caption{reinserting split copies of $v$}
    \label{fig:reinserting-split-copies}
  \end{subfigure}
  \caption{
%   Reinserting vertices removed at reduction  to obtain a 2-layered drawing of the graph $G$ from a 2-layered drawing of the graph $G''_1$. 
   Reinserting split copies $v_1, v_2, \dots, v_{\text{deg}(v)}$ of $v\in B_\text{tr}$ into a planar 2-layered drawing of $G''_1$ to get a planar  2-layered drawing of $G$. }
    \label{fig:g1-g-B_3}
\end{figure}
  \begin{figure}[t]{}
    \centering
    \includegraphics{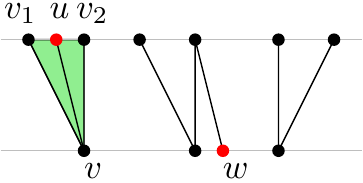}
    \caption{
    % Reinserting vertices removed at reduction Step~1 to obtain a 2-layered drawing of the graph $G$ from $G'_1$. 
    %
    % For each vertex $w\in B_s$ we can reinsert it into a 2-layered drawing of $G''_1$ because each vertex in $B''_1$ has degree at most two. For each vertex $u\in T_s$ we can reinsert it into a 2-layered drawing of $G''_1$ because it is incident to vertex $v\in B''_1$ of degree at least two, and therefore, $v$ forms a safe wedge with some two vertices $v_1, v_2 \in T''_1$ incident to $v$ into which the edge $vu$ can be reinserted.
    %
    Reinserting $w\in B_s$ and $u\in T_s$ into a 2-layered drawing of $G''_1$ to obtain a 2-layered drawing of $G$. A safe wedge $v_1vv_2$ is filled green.
    }
    \label{fig:singletones-in-T}
  \end{figure}

Now we state two observations about the degrees of the vertices of the graph~$G''_1$.

\begin{lemma}
\label{lem:obs-max-deg}
For each vertex $v\in T''_1$ it holds that deg$(v)\le k'_1+2$ if there exists a planar 2-layered drawing of $G''_1$ with at most $k'_1$ split vertices.
\end{lemma}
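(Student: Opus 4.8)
The plan is to fix a planar 2-layered drawing $\Gamma$ of $G''_1$ that splits at most $k'_1$ vertices, fix a vertex $v\in T''_1$, and set $d=\deg(v)$. If $d\le 2$ there is nothing to prove, since the mere existence of $\Gamma$ forces $k'_1\ge 0$; so I would assume $d\ge 3$. The first thing to record is that every neighbour of $v$ in $G''_1$ has degree exactly two: by the reduction every vertex of $B''_1$ has degree at most two (as noted in the proof of Claim~\ref{cl:g-g1}), while Step~\ref{step2} guarantees that no degree-one vertex of $B''_1$ is attached to a vertex of degree at least three, and $\deg(v)=d\ge 3$. The second observation is that $v$ is not split in $\Gamma$, because only vertices of $B$ are ever split and $v\in T$; hence $v$ is drawn as a single point on the top layer from which $d$ edges emanate.

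Next I would examine the bottom endpoints of these $d$ edges. Each is either the drawing of an unsplit neighbour $u$ of $v$, or the drawing of the single split copy of a split neighbour $u$ of $v$ that inherits the edge to $v$; since $u$ has $v$ as a neighbour only once, each neighbour of $v$ contributes exactly one such endpoint, so these are $d$ distinct points $x_1,\dots,x_d$ on the bottom layer, which I order from left to right. The key claim is that for every internal index $i$ with $2\le i\le d-1$, the neighbour of $v$ producing $x_i$ must be split. Indeed, suppose $x_i$ is the drawing of an unsplit neighbour $u$. Then $\deg(u)=2$, so $x_i$ carries a second edge $e$ to some top-layer vertex $w\ne v$. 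The edges $vx_{i-1}$ and $vx_i$ together with the piece of the bottom layer between $x_{i-1}$ and $x_i$ bound a region whose only top-layer point is $v$, and symmetrically for $vx_i$, $vx_{i+1}$ and the piece of the bottom layer between $x_i$ and $x_{i+1}$. Leaving $x_i$, the edge $e$ enters one of these two regions; since it must reach the top-layer vertex $w\notin\{v\}$ and cannot exit through the bottom side (which lies on the bottom layer), it has to cross one of $vx_{i-1}$, $vx_i$, $vx_{i+1}$, contradicting the planarity of $\Gamma$.

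Finally I would count: the $d-2$ internal points $x_2,\dots,x_{d-1}$ come from $d-2$ distinct neighbours of $v$, each of which is split by the key claim, so $k'_1\ge d-2$, i.e.\ $\deg(v)\le k'_1+2$. I expect the only delicate point to be the very first step, namely arguing that every neighbour of $v$ has degree exactly two in $G''_1$ (a vertex of $B''_1$ could a priori have dropped to degree one when its other neighbours were removed in Step~\ref{step2}); this is precisely where one invokes that the reduction leaves no degree-one vertex of $B''_1$ attached to a vertex of degree at least three. Everything after that is a routine planarity argument, of the same ``safe wedge'' flavour already used for reinserting the vertices of $T_s$ in the proof of Claim~\ref{cl:g-g1}.
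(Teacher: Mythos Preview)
Your argument is essentially the paper's own proof, spelled out in more detail: the paper assumes $\deg(v)=k'_1+3$, invokes Step~\ref{step2} to conclude that $v$ has no neighbours of degree one in $B''_1$, and then asserts that all but two neighbours of $v$ must be split. You unfold exactly this reasoning directly (rather than by contradiction), making explicit the planarity argument that the $d-2$ internal neighbours $x_2,\dots,x_{d-1}$ are the ones forced to split, and you correctly flag that the appeal to Step~\ref{step2} for the degree-two property of $v$'s neighbours is the one place that needs care.
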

\begin{proof}
Consider for contradiction that there is a vertex $v\in T''_1$ that has deg$(v) = k'_1+3$; see Figure~\ref{fig:max-deg}. According to Step~\ref{step2} $v$ does not have any neighbors of degree one in $B''_1$, therefore, to obtain a planar 2-layered drawing of $G''_1$ all but two neighbors of $v$ must be split, that is, $k'_1+1$ vertices must be split; contradiction.
\end{proof} 
\begin{figure}
\centering
\includegraphics{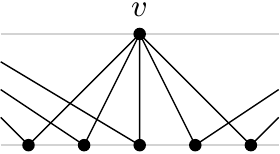}
\caption{All but two neighbors of $v$ must be split to obtain a planar 2-layered drawing of $G''_1$ because each neighbour of $v$  has degree at least two.}
\label{fig:max-deg}
\end{figure}

To make our second observation let $T''_{1,~{\text{deg}(v)\ge 3}}$ be the set of all the vertices of degree at least three in $T''_1$. 
\begin{lemma}
\label{lem:num-high-deg-vtc}
% There are at most $2k'_1$ vertices of degree at least three in $T'_1$.
It holds that $\big\vert T''_{1,~{\text{deg}(v)\ge 3}}\big\vert \le 2k'_1$ if there exists a planar 2-layered drawing of $G''_1$ with at most $k'_1$ split vertices.
\end{lemma}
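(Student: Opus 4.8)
The plan is to charge every vertex of $T''_{1,~\deg(v)\ge 3}$ to a vertex that is split in the given drawing, so that each split vertex is charged by at most two vertices of $T''_{1,~\deg(v)\ge 3}$; summing over the at most $k'_1$ split vertices then yields the bound $2k'_1$. First I would record the two structural facts about $G''_1$ that come out of the first reduction rule. As already used in the proof of Claim~\ref{cl:g-g1}, after Steps~\ref{step1} and~\ref{step2} every vertex of $B''_1$ has degree at most two. In addition, Step~\ref{step2} removes every degree-one vertex of $B$ whose unique neighbour has degree at least three, so no degree-one vertex of $B''_1$ is adjacent to a vertex of $T''_1$ of degree at least three; hence every neighbour in $B''_1$ of a vertex $v\in T''_{1,~\deg(v)\ge 3}$ has degree \emph{exactly} two in $G''_1$. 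I would also invoke the classical characterisation (see~\cite{eades1986edge}) that a graph admits a planar 2-layer drawing if and only if each of its connected components is a \emph{caterpillar}, i.e.\ a tree that becomes a path (the \emph{spine}) after the removal of all leaves; in particular, every vertex of degree at least three of such a graph lies on the spine of its component and therefore has at most two neighbours that are not leaves.

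Next I would fix a planar 2-layer drawing of $G''_1$ in which a set $X\subseteq B''_1$ with $|X|\le k'_1$ is split, and let $\widehat G$ denote the resulting graph. By the characterisation, $\widehat G$ is a disjoint union of caterpillars, and since splitting affects only vertices of $B''_1$ we have $\deg_{\widehat G}(v)=\deg_{G''_1}(v)$ for every $v\in T''_1$. Take any $v\in T''_{1,~\deg(v)\ge 3}$. Then $\deg_{\widehat G}(v)\ge 3$, so $v$ lies on the spine of its component of $\widehat G$, and hence at least $\deg_{\widehat G}(v)-2\ge 1$ of its neighbours in $\widehat G$ are leaves (vertices of degree one). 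On the other hand, every neighbour $w$ of $v$ in $G''_1$ has degree exactly two there, so if $w$ were not split it would remain a degree-two neighbour of $v$ in $\widehat G$, which is not a leaf. Therefore a leaf neighbour of $v$ in $\widehat G$ must be a split copy of some neighbour of $v$ in $G''_1$, so at least one neighbour of $v$ lies in $X$. Choosing one such neighbour $w(v)\in X$ for each $v\in T''_{1,~\deg(v)\ge 3}$, and noting that every $w\in X\subseteq B''_1$ has at most two neighbours in $T''_1$ (because $\deg_{G''_1}(w)\le 2$), we see that each $w\in X$ equals $w(v)$ for at most two vertices $v$. Summing over $X$ gives $\big\vert T''_{1,~\deg(v)\ge 3}\big\vert\le 2|X|\le 2k'_1$.

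The step I expect to require the most care is the structural fact, extracted from the first reduction rule, that a degree-$\ge 3$ vertex of $T''_1$ has no degree-one neighbour in $G''_1$: this is precisely what prevents such a vertex from being ``served'' only by pendant neighbours already present in $G''_1$, and therefore forces a split to be spent on it. Granting that fact, the remainder is the short charging argument above, built on the caterpillar characterisation of 2-layer planar graphs.
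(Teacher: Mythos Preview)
Your argument is correct and follows essentially the same route as the paper: both use that, after Step~\ref{step2}, every neighbour in $B''_1$ of a vertex $v\in T''_{1,\deg(v)\ge 3}$ has degree exactly two, deduce that at least one such neighbour must be split, and then charge each split vertex to at most two high-degree vertices of $T''_1$ because its degree is at most two. The only difference is presentational: you justify the ``at least one neighbour is split'' step explicitly via the caterpillar characterisation of 2-layer planar graphs, while the paper appeals to the same geometric fact more informally (with a figure).
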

\begin{proof}
Observe that according to Step~\ref{step2} no vertex in $T''_{1,~{\text{deg}(v)\ge 3}}$ has any neighbors of degree one in $B''_1$. This implies that for each vertex $v$ in $T''_{1,~{\text{deg}(v)\ge 3}}$ at least one of its neighbors $u\in B''_1$ must be split to obtain a planar 2-layered drawing of $G''_1$; see Figure~\ref{fig:high-deg}. But the degree of $u$ is at most two, therefore, splitting $u$ can resolve crossings for at most two vertices $v_1, v_2 \in T''_{1,~{\text{deg}(v)\ge 3}}$. Thus,  if $|T''_{1,~{\text{deg}(v)\ge 3}}| > 2k'_1$, more than $k'_1$ vertices in $B''_1$ must be split to obtain a planar 2-layered drawing of $G''_1$; contradiction. 
\end{proof} 

\begin{figure}[tb]
  \begin{subfigure}[t]{0.49\textwidth}
    \centering
    \includegraphics[page=1]{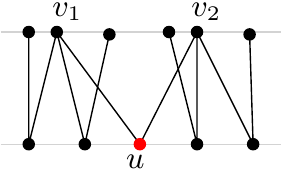}
    % \caption{}
  \end{subfigure}
  \hfill
  \begin{subfigure}[t]{0.49\textwidth}
    \centering
    \includegraphics[page=2]{high-deg}
    % \caption{}
  \end{subfigure}
  \caption{For each vertex in $v \in T''_{1,~{\text{deg}(v)\ge 3}}$ at least one of its neighbors $u\in B''_1$ must be split to obtain a planar 2-layered drawing of $G''_1$ because each of the neighbours of $v$ has degree at least two. Splitting $u$ can resolve crossings for at most two vertices $v_1, v_2 \in T''_{1,~{\text{deg}(v)\ge 3}}$ because it has degree at most two.}
    \label{fig:high-deg}
\end{figure}

For a subset of vertices $W$ let $N(W)$ denote the set of neighbors of $W$. From Lemma~\ref{lem:obs-max-deg} and~\ref{lem:num-high-deg-vtc} we obtain the following.
\begin{lemma}
\label{lem:core-size}
% There are at most $2k'_1$ vertices of degree at least three in $T'_1$.
The graph induced by the vertices $T''_{1,~{\text{deg}(v)\ge 3}}\bigcup N(T''_{1,~{\text{deg}(v)\ge 3}})$ has at most $2k'_1(k'_1+2)$ vertices if there exists a planar 2-layered drawing of $G''_1$ with at most $k'_1$ split vertices.
\end{lemma}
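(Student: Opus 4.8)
This lemma is where the two degree bounds just established get cashed in, so I would prove it as a direct corollary of Lemmas~\ref{lem:obs-max-deg} and~\ref{lem:num-high-deg-vtc} rather than introducing anything new; note that the hypothesis ``there exists a planar $2$-layered drawing of $G''_1$ with at most $k'_1$ split vertices'' is exactly the hypothesis needed to invoke both of those lemmas. First I would apply Lemma~\ref{lem:num-high-deg-vtc} to bound the number of ``centres'', obtaining $|T''_{1,~\text{deg}(v)\ge 3}| \le 2k'_1$. Then, since $T''_{1,~\text{deg}(v)\ge 3}$ is a subset of $T''_1$, Lemma~\ref{lem:obs-max-deg} applies to each of these vertices and bounds the degree of each by $k'_1+2$; a union bound over the at most $2k'_1$ centres therefore yields $|N(T''_{1,~\text{deg}(v)\ge 3})| \le 2k'_1(k'_1+2)$.

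Finally I would assemble the count. Because $G''_1$ is bipartite, $T''_{1,~\text{deg}(v)\ge 3}$ is contained in $T''_1$ while $N(T''_{1,~\text{deg}(v)\ge 3})$ is contained in $B''_1$, so these two sets are disjoint and the vertex set of the induced subgraph is precisely their union; its size is thus at most $2k'_1(k'_1+2)$ from the neighbourhoods plus at most $2k'_1$ for the centres themselves, which is the claimed quadratic bound in $k'_1$. There is essentially no obstacle in this step: the only points that merit a moment's care are verifying that Lemma~\ref{lem:obs-max-deg} is legitimately applicable to the centres (it is, as $T''_{1,~\text{deg}(v)\ge 3}\subseteq T''_1$), and keeping the additive contribution of the centres separate from the multiplicative contribution of their neighbourhoods when writing down the constant. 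The real content of the bound lives in the earlier structural lemmas, so here the proof is just a short counting argument.
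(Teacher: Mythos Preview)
Your approach is exactly the paper's: the lemma is stated there as an immediate consequence of Lemmas~\ref{lem:obs-max-deg} and~\ref{lem:num-high-deg-vtc} with no further argument, and you have spelled out precisely that counting. One small remark: your own arithmetic gives $2k'_1(k'_1+2)+2k'_1=2k'_1(k'_1+3)$, which slightly overshoots the constant $2k'_1(k'_1+2)$ written in the lemma; this is an imprecision in the paper's stated bound rather than a flaw in your reasoning, and it is immaterial for the downstream $O(k^6)$ estimate.
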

Let $C = T''_{1,~{\text{deg}(v)\ge 3}}\bigcup N(T''_{1,~{\text{deg}(v)\ge 3}})$ and call the graph induced by the vertices in $C$ the \emph{core} of $G''_1$.
Now we can proceed to the second reduction rule.

\paragraph{Second reduction rule:}
Observe that all the vertices in $(B''_1 \bigcup T''_1) \setminus C$ have degree at most two, and therefore, induce paths or cycles in  $G''_1$.
Since the cycles are not connected to the core in $G''_1$ (because their vertices have degree at most two in $G''_1$) we can remove them and handle separately. We need to account for one split vertex per each such cycle. Let $\mathcal{E}$ be the set of these cycles and let $k'_2 = k'_1 - |\mathcal{E}|$. In addition, let $Z$ be the set of vertices that we split in these cycles,  $\mathcal{S}:= \mathcal{S} \bigcup Z$.

Let $\mathcal{P}$ be the set of paths induced in $G''_1$ by the vertices in $(B''_1 \bigcup T''_1) \setminus C$ of length at least $2k'_2+5$. We reduce $G''_1$ to $G'_2 = (T'_2\cup B'_2, E'_2)$ by \emph{shortening}  each path $p\in\mathcal{P}$ (that is, iteratively removing one of the middle vertices of $p$ from $T''_1$ and identifying its two neighbours in  $B''_1$) until $p$ has at most $2k'_2+5$ vertices. Because during shortening step the length of $p$ decreases by two, after the shortening process $p$ will still have at least $2k'_2+3$ vertices.

\begin{iclaim}
\label{cl:g1-g2}
The graph $G''_1$ is a \textsc{Yes} instance for \crsv{k'_1} if and only if $G'_2$ is a \textsc{Yes} instance for \crsv{k'_2}. 
\end{iclaim}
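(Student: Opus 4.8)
The plan is to establish the two directions of the equivalence separately, after first disposing of the cycle components, which are independent of everything else. Since every vertex of a cycle in $\mathcal{E}$ has degree two in $G''_1$, each such cycle is a connected component; a bipartite cycle admits no planar 2-layer drawing, while splitting one of its (necessarily present) $B$-vertices turns it into a path, which does. Hence each cycle costs exactly one split and can be placed in an otherwise empty region of the plane, so removing all of $\mathcal{E}$ changes the budget by exactly $|\mathcal{E}|$ in both directions, and it remains to show that \emph{shortening a single long path $p$ preserves the answer at budget $k'_2$}. It is convenient to use the standard fact (see \cite{eades1986edge}) that a bipartite graph with parts $T,B$ admits a planar 2-layer drawing with $T$ on top and $B$ below if and only if each of its connected components is a caterpillar---the required bipartition is forced by the $2$-colouring, which splitting $B$-vertices does not disturb---so that ``\textsc{Yes} for \crsv{k}'' is equivalent to ``one can split at most $k$ vertices of $B$ so that the graph becomes a caterpillar forest''.

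For the ``only if'' direction, take a solution for $G''_1$, i.e.\ a set $S\subseteq B''_1$ with $|S|\le k'_1$ whose splitting turns $G''_1$ into a caterpillar forest. Every cycle component contains a vertex of $S$ (a cycle is not a forest), so $S$ uses at most $k'_1-|\mathcal{E}|=k'_2$ vertices outside $\mathcal{E}$. Now perform the shortening of $p$ one contraction at a time: a single step deletes a middle vertex $t\in T''_1$, which has degree two with path-neighbours $b_1,b_2\in B''_1$, and identifies $b_1$ with $b_2$. In the drawing this is realised by deleting $t$ and rerouting the edge $e$ that joined $b_2$ to the rest of $p$ along the concatenation of the three pairwise non-crossing arcs $b_1t$, $tb_2$, $e$, which is a simple arc disjoint from everything else; the number of split vertices does not increase, since if $b_1$ or $b_2$ lay in $S$ the merged vertex just inherits the union of their copies. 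Equivalently and more cleanly: in the split graph, deleting the degree-two vertex $t$ and contracting the $3$-vertex path formed by $t$ and its two neighbours maps a caterpillar forest to a caterpillar forest. Either way we obtain a solution for $G'_2$ with at most $k'_2$ split vertices.

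For the ``if'' direction, take a solution for $G'_2$ with at most $k'_2$ split vertices. Reinsert every cycle of $\mathcal{E}$ after splitting one of its $B$-vertices, spending $|\mathcal{E}|$ further splits, for a total of at most $k'_1$. To restore a shortened path $\bar p$ to its original length, recall that $\bar p$ has at least $2k'_2+3$ vertices, hence at least $k'_2+1$ vertices of $B$, so at least one $B$-vertex $b^\ast$ of $\bar p$ is not split. Since $b^\ast$ has degree at most two, one of its incident edges lies on $\bar p$, and around that edge there is a thin disk meeting the rest of the drawing only in tiny initial pieces at its two ends---a \emph{safe wedge} in the sense of Figure~\ref{fig:singletones-in-T}. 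Inside this disk we subdivide the edge as often as needed, placing the new $B$-vertices on the bottom line next to $b^\ast$, the new $T$-vertices on the top line next to the other endpoint, and routing the lengthened path as parallel arcs; this adds no crossings and no splits. (In caterpillar-forest language: replacing the unsplit, degree-$\le 2$ vertex $b^\ast$ by a path of new vertices keeps the forest a caterpillar forest, since $b^\ast$ lies on a spine or on a path component.) Doing this for every shortened path yields a solution for $G''_1$ with at most $k'_1$ split vertices.

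The main obstacle is the bookkeeping in the ``only if'' direction when the two $B$-vertices merged by a contraction are themselves split: one must match split copies carefully so that the contracted object really is $G'_2$ equipped with the intended (no larger) split set, and one must check that identifying two possibly distant $B$-vertices creates no crossing---which is exactly what rerouting along the old sub-path guarantees---and that the resulting, possibly non-monotone drawing can be replaced by a genuine 2-layer drawing. The cleanest way around the last point, and in fact a clean way to run the whole argument, is the caterpillar-forest characterization recalled above, which reduces every step to an elementary graph operation (deleting a degree-two vertex and contracting, or expanding a degree-$\le 2$ vertex into a path) that visibly preserves the class of caterpillar forests.
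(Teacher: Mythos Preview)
Your argument is correct and follows essentially the same skeleton as the paper's proof: dispose of the cycle components at a cost of one split each, use shortening for the ``only if'' direction, and for the ``if'' direction apply pigeonhole on the at least $k'_2+1$ $B$-vertices of a shortened path to find an unsplit $b^\ast$ whose safe wedge absorbs the missing subpath.

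The main difference is that you are considerably more careful in the ``only if'' direction. The paper simply asserts that shortening a path in a planar 2-layer drawing causes no crossings and says nothing about what happens to the split set when the two identified $B$-vertices are themselves split; you spell out the rerouting along the old three-edge arc and the merging of split copies, and then offer the caterpillar-forest characterization as a clean combinatorial alternative that sidesteps the drawing-level bookkeeping entirely. That extra care is genuinely useful, since the paper's one-line treatment glosses over exactly the issues you flag in your last paragraph, but it does not change the underlying strategy.
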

\begin{proof}
In one direction the claim is obvious, because shortening paths in a planar 2-layered drawing of the graph $G''_1$ does not cause any crossings.

For the other direction, consider a planar 2-layered drawing of the graph $G'_2$. To obtain from it a planar 2-layered drawing of the graph $G''_1$ we need to: (1) reinsert each of the cycles in $\mathcal{E}$ that we have removed from $G''_1$ to obtain $G'_2$, and (2) reinsert back the missing parts of the paths of $\mathcal{P}$, which are made up of the vertices from $(B''_1 \bigcup T''_1) \setminus C$. Because the cycles in $\mathcal{E}$ are disconnected from $G''_1$ we can reinsert them anywhere in the drawing wherever there is space with one split vertex in $Z$. 

Let us now argue why we can reinsert the missing vertices from $(B''_1 \bigcup T''_1) \setminus C$ into the paths in $\mathcal{P}$; we will refer to Figure~\ref{fig:path-insert} for illustration.
Because for any such path $p \in \mathcal{P}$ the length of $p$ is at least $2k'_2+3$ there must be at least one vertex $v$ in $B'_2$ that was not split in a planar 2-layered drawing of $G'_2$ (see Figure~\ref{fig:path-insert-1}), as otherwise a planar 2-layered drawing of $G'_2$ cannot be constructed with at most $k'_2$ splits. Therefore, there must be a safe wedge formed by the unsplit vertex $v$ and the two edges of the path $p$ incident to $v$ providing space to reinsert the missing vertices without causing any crossings; see Figure~\ref{fig:path-insert-2}.
\end{proof}

\begin{figure}[tb]
  \begin{subfigure}[t]{0.55\textwidth}
    \centering
    \includegraphics[page=1]{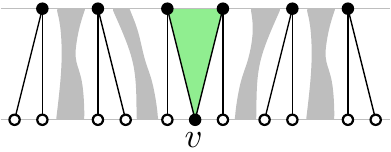}
    \caption{$p$ (black) in $G'_2$, safe wedge (green) }
    % \caption{}
    \label{fig:path-insert-1}
  \end{subfigure}
  \hfill
  \begin{subfigure}[t]{0.44\textwidth}
    \centering
    \includegraphics[page=2]{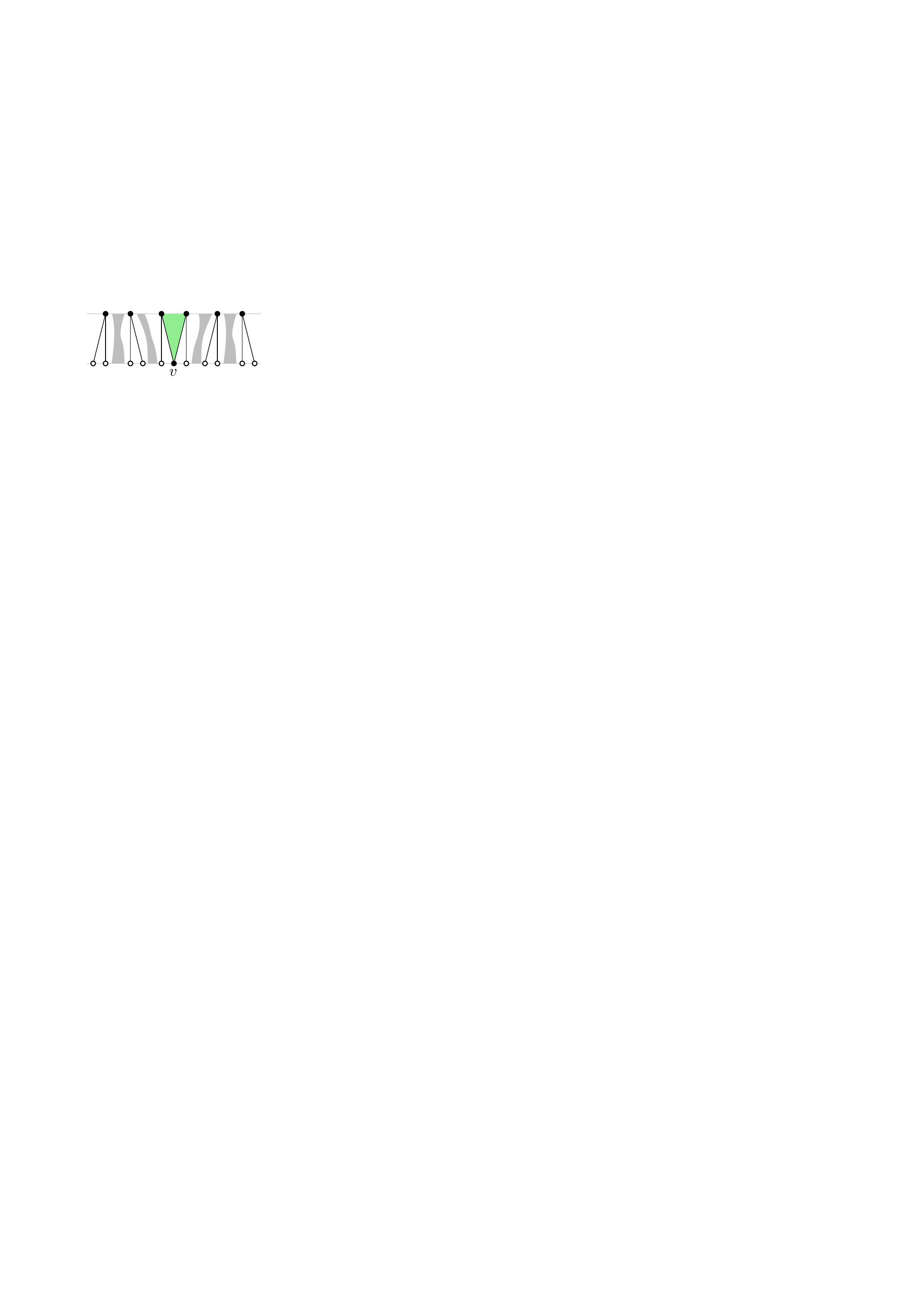}
    \caption{reinserting the missing part of $p$}
    % \caption{}
    \label{fig:path-insert-2}
  \end{subfigure}
   \caption{Reinserting the missing part of the path $p\in\mathcal{P}$ into a planar 2-layered drawing of $G'_2$ to get a planar  2-layered drawing of $G''_1$.}
    \label{fig:path-insert}
\end{figure}

\begin{lemma}
\label{lem:g2-size}
The graph $G'_2$ has at most $O(k^6)$ vertices. 
\end{lemma}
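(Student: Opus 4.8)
The plan is to bound the size of $G'_2$ by accounting separately for the vertices of the core $C$, the vertices lying on the shortened paths of $\mathcal{P}$, and the vertices of the remaining components of $G''_1-C$, and to show each group is polynomial in $k$. By Lemma~\ref{lem:core-size} together with $k'_1\le k$, the core contributes $|C|\le 2k'_1(k'_1+2)=O(k^2)$ vertices, so the whole task reduces to controlling the paths that survive outside the core.

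The main step is bounding how many paths of $(B''_1\cup T''_1)\setminus C$ remain in $G'_2$. Every connected component of $G''_1-C$ has all its vertices of degree at most two and is therefore a path or a cycle; the cycles, as well as every path component that has no vertex adjacent to $C$, are disconnected from the core and are removed by the second reduction rule — each removed cycle being paid for by one forced split counted in $|\mathcal{E}|$, and each removed stand‑alone path admitting a planar 2-layer drawing on its own and costing no split — so the only surviving paths are those with an endpoint adjacent to a vertex of $C$. Now every vertex of $T''_{1,\deg\ge 3}\subseteq C$ has all of its neighbours in $C$ (they lie in $N(T''_{1,\deg\ge 3})$), so each edge with exactly one endpoint in $C$ is incident to a vertex of $N(T''_{1,\deg\ge 3})$; such a vertex lies in $B''_1$, hence has degree at most two in $G''_1$, and it has at least one neighbour in $T''_{1,\deg\ge 3}$, so it sends at most one edge out of $C$. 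Consequently the number of edges leaving $C$, and therefore the number of surviving paths, is at most $|N(T''_{1,\deg\ge 3})|\le\sum_{v\in T''_{1,\deg\ge 3}}\deg(v)$, which by Lemmas~\ref{lem:obs-max-deg} and~\ref{lem:num-high-deg-vtc} is at most $2k'_1(k'_1+2)=O(k^2)$.

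Finally, each surviving path has at most $2k'_2+5=O(k)$ vertices after the shortening process, so the surviving paths contribute $O(k^2)\cdot O(k)=O(k^3)$ vertices in total, and adding the core gives $|V(G'_2)|\le 2k'_1(k'_1+2)+2k'_1(k'_1+2)(2k'_2+5)=O(k^3)$, comfortably within the stated $O(k^6)$ (even rather crude versions of the intermediate bounds stay polynomial). The step I expect to be the real obstacle is the count of surviving paths: it relies on the observation that a component of $G''_1-C$ can be glued to the core only through a degree‑$\le 2$ vertex, combined with the degree bound of Lemma~\ref{lem:obs-max-deg} and the bound on the number of high‑degree $T$-vertices of Lemma~\ref{lem:num-high-deg-vtc}; one must also make sure the reduction discards every component of $G''_1-C$ that is not attached to the core (correctly accounting for the cycles among them), since otherwise arbitrarily many trivial path components of $G''_1$ would survive into $G'_2$.
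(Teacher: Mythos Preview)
Your argument is correct and in fact tighter than the paper's. The paper bounds the number of paths outside the core by the crude estimate $\binom{|C|}{2}(k'_1+2)\in O(k^5)$ --- essentially choosing two core vertices as the path's anchors and allowing up to $k'_1+2$ paths per pair via the degree bound of Lemma~\ref{lem:obs-max-deg} --- and then multiplies by the path-length cap $2k'_2+5$ to reach the stated $O(k^6)$. Your boundary-edge count is a genuinely different and sharper idea: since every vertex of $N\bigl(T''_{1,\deg\ge 3}\bigr)\subseteq B''_1$ has degree at most two and at least one neighbour already inside $C$, it contributes at most one edge leaving $C$, so there are at most $|N(T''_{1,\deg\ge 3})|\le 2k'_1(k'_1+2)$ attached paths and the total comes to $O(k^3)$. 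What the paper's cruder count buys is only simplicity; your argument buys a cubic rather than a degree-six kernel.

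You are also right to single out the unattached components. The paper's second reduction rule as written removes only the cycles and shortens long paths; isolated short path components of $G''_1$ are not formally discarded, and the paper's $\binom{|C|}{2}$ count tacitly assumes every surviving path has both ends on the core. So the concern you raise at the end is real for the paper's proof as well, and the harmless extra pruning step you describe (drop unattached path components, which are already biplanar and cost no splits) is exactly what is needed to make either argument literally correct.
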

\begin{proof}
According to Lemma~\ref{lem:core-size} the core $C$ has at most  $2k'_1(k'_1+2)$ vertices and
according to Lemma~\ref{lem:obs-max-deg} the highest degree of each vertex in $C$ is at most $k'_1+2$. Therefore, there can be at most ${{2k'_1(k'_1+2)}\choose{2}}(k'_1+2)$ many paths induced by the vertices in $(B'_2 \bigcup T'_2) \setminus C$.
Moreover, after applying the second reduction rule each such path has at most $2k'_2+5$ vertices.
Thus the total number of vertices in $G'_2$ is at most ${{2k'_1(k'_1+2)}\choose{2}}(k'_1+2) (2k'_2+5) \in O(k^6)$.
\end{proof}

Finally we decide \crsv{k'_2} for $G'_2$ by brute force. More precisely, we check all subsets $X$ of $B'_2$ such that $|X| \le k'_2 \le k$. For each vertex $v$ in $X$ we check all ways to partition its incident edges (at most $k+2$) into non-empty subsets, this represents splitting of $v$. The number of such partitions is bounded by the Bell number of order $k+2$, which in turn is bounded by $(k+2)!$.
Then we run a linear time algorithm to check whether a planar 2-layered drawing of the resulting graph exists. 
This can be done in time $2^{O(k^6)}(k!)^{O(k)} \cdot m \subset 2^{O(k^6)}\cdot m$, where $m$ is the number of edges of $G$.
If $G'_2$ is a \textsc{yes} instance for \crsv{k'_2} with the subset of split vertices $X$, we update our solution set $\mathcal{S}:=\mathcal{S}\bigcup X$ and return it. It is worth noting that the kernelization itself can be done in time $O(m)$ since we process each vertex in constant time given that we know its degree. Thus, the kernelization does not affect the total asymptotic runtime of the algorithm.

\section{Conclusion and Open Problems}
We presented an FPT algorithm for the \crsv{k} problem parameterized by~$k$. Improving the runtime is needed for this algorithm to be useful in practice, as the constants are very large. 
Another natural direction is to look for an FPT algorithm for the other variant of the problem, that is, minimizing the number of splits, which was recently shown to be NP-hard~\cite{ahmed-split-vertices}. 
\begin{problem*}[Crossing Removal with $\boldsymbol{k}$ Splits  -- \crs{k}]
Let $G = (T \cup B, E)$ be a bipartite graph. Decide whether there is a planar 2-layer drawing of $G$ after applying at most $k$ splits to the vertices in $B$.
\end{problem*}
Is there an FPT algorithm for the \crs{k} problem parameterized  by $k$? It is not clear how to adjust the algorithm in Theorem~\ref{thm:crsvk-fpt} as it splits every vertex in~$B_\text{tr}$ as many times as its degree, and thus, the number of splits is not bounded by a function of the parameter $k$.

\bibliographystyle{splncs04}
\bibliography{refs}

\end{document}